\documentclass[12pt]{article}
\usepackage{amssymb,amsmath}
\usepackage{graphicx}
\usepackage{amsfonts}
\usepackage{float}
\usepackage{flafter}
\usepackage{color}

\labelwidth\leftmargini\advance\labelwidth-\labelsep

\def\R{\relax\ifmmode I\!\!R\else$I\!\!R$\fi}

\def\Z{\relax\ifmmode Z\!\!\!Z\else$Z\!\!\!Z$\fi}

\def\C{\relax\ifmmode C\!\!\!\!I\else$C\!\!\!\!I$\fi}

\def\K{\relax\ifmmode I\!\!K\else$I\!\!K$\fi}

\def\N{\relax\ifmmode I\!\!N\else$I\!\!N$\fi}

\newcounter{defcounter}[section]
{\vspace{0.1cm}\begin{sloppypar}\noindent\stepcounter{defcounter}{\bfseries
Definition
      \thesection.\thedefcounter}}%
{\end{sloppypar}\vspace{0.1cm}}
\newtheorem{conjecture}{Conjecture}[section]

\newtheorem{proposition}{Proposition}[section]

\newcommand{\proof}{{\bf Proof.} }

\newcommand{\qed}{\hfill $\square$}

\begin{document}
\thispagestyle{empty}
\begin{center}
{\Large {\bf Recursive overlap Bernoulli distributions and an entropy concavity conjecture}}
\end{center}
\begin{center}J\"org Neunh\"auserer\\
Technical University of Braunschweig \\
joerg.neunhaeuserer@web.de
\end{center}
\begin{center}
\begin{abstract}
We introduce a family of recursively generated finite probability distributions obtained from left and right embeddings with overlaps. The construction interpolates between the classical binomial distribution and the non-overlapping Bernoulli product distribution. We derive explicit formulas for the expectation, variance, and the generating function of higher moments, and formulate a conjecture asserting that the Shannon entropy is concave. The conjecture is proved in the two extremal cases and supported by symbolic computations for numerous overlap sequences.\\
{\bf MSC 2020:  Primary 60C05; Secondary 94A17, 60G50.}\\
{\bf Keywords: Bernoulli distributions, recursive probability distributions,
Shannon entropy, entropy concavity, overlap constructions.}
\end{abstract}
\end{center}
\section{Introduction}
In this note we introduce a family of recursive overlap Bernoulli distributions generated in the following way.   
Let $(a_{n})$ be a strictly increasing sequence of integers with $a_{1}=2$ and let $p\in[0,1]$. Define probability vectors $P_n(p)\in\mathbb R^{a_n}$ by
$
P_1(p)=(p,1-p)$
and $$
P_{n+1}(p)
=
p\,L(P_n(p))+(1-p)\,R(P_n(p)),
$$
where the “left’’ and “right’’ embedding operators
$
L:\mathbb R^{a_n}\to\mathbb R^{a_{n+1}},
R:\mathbb R^{a_n}\to\mathbb R^{a_{n+1}}
$
are given by
$$
L(P_n(p))=(P_n(p),0,\ldots,0),
\quad
R(P_n(p))=(0,\ldots,0,P_n(p)).
$$ 
Let $X_{n}(p)$ be the corresponding random variable taking values in $\{1,\dots, a_{n}\}$. If $(a_{n})=(n+1)$ the random variables $X_{n}(p)$ are binomial distributed with
\[ \mathbb{P}(X_{n}(p)=k)=\binom{n}{k-1}p^{n-k+1} (1-p)^{k-1}\] 
for $k=1,\dots,n+1$. If $(a_{n})=(2^n)$ the random variables form the non-overlapping Bernoulli with  
\[ \mathbb{P}(X_{n}(p)=k)=p^{n-\sharp_{n}(k)} (1-p)^{\sharp_{n}(k)}\] 
for $k=1,\dots,2^{n}$, where $\sharp_{n}(k)$ is the number of digits $1$ in the dyadic representation of $k-1$. If $a_{n+1}\ge 2a_n$ no overlap occurs and $X_n(p)$ has the same distribution with values in a subset of $\{1,\dots,a_n\}$ of cardinality $2^n$. If $n+1<a_{n+1}<2a_n$ for some $n$, then the two copies overlap. The overlap creates new probability distributions which interpolate between the binomial distribution and the separated Bernoulli convolutions.\\
In the next section we will calculate expectation, variance and moment generating function of the random variables $X_{n}(p)$. In section 3 we state the entropy conjecture 
that the Shannon entropy of $X_{n}(p)$ is concave in $p$ for all $n\ge 1$ and hence an unimodal function with maximum at $p=1/2$. We prove this conjecture in the binomial case $a_{n}=n+1$ and for all sequences with $a_{n+1}\ge 2a_{n}$. In the last section we present computational evidence that the conjecture holds in general.    
The motivation for this construction and the entropy conjecture is twofold.
First, the construction generalizes random walks on infinite rooted self-similar
graphs studied in \cite{NEU}. For example, for the recursive sequence
\[
a_{n+1}=a_n+a_{n-1}+1,\qquad a_0=1,\ a_1=2,
\]
we recover the directed random walk on the Fibonacci graph.
Second, the construction is closely related to 
infinite convolved Bernoulli measures, which play an important role in fractal geometry; see Chapter~8 of
\cite{BSS} and the references therein.
If Conjecture~3.1 is true, then the dimension function of the corresponding
singular Bernoulli measures would be concave.
\section{Expectation, variance and higher moments}
We first give here the expectation and variance of the random variables defined in the introduction. 
\begin{proposition}
For a strictly increasing sequence of integers $(a_{n})$ with $a_{1}=2$ and $p\in[0,1]$ the expectation of $X_{n}(p)$ is
\[ \mathbb{E}(X_{n}(p))=p+a_{n}(1-p) \]
the variance is 
\[ \mathbb{V}(X_{n}(p))=p(1-p)(1+\sum_{k=1}^{n-1}(a_{k+1}-a_{k})^2) \]
\end{proposition}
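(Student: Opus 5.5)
The plan is to turn the recursive definition of $P_{n+1}(p)$ into a representation of $X_n(p)$ as a sum of independent, scaled Bernoulli variables. Expectation and variance then follow from linearity and from additivity of variance under independence.

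\textbf{Step 1: the shift recursion.} The vector $L(P_n(p))$ is the law of $X_n(p)$ itself, seen as a variable in $\{1,\dots,a_{n+1}\}$. The vector $R(P_n(p))$ puts $P_n(p)$ in the last $a_n$ coordinates of $\mathbb R^{a_{n+1}}$, so it is the law of $X_n(p)+(a_{n+1}-a_n)$. The recursion says $P_{n+1}(p)$ is the mixture of these two laws with weights $p$ and $1-p$. Hence
\[
X_{n+1}(p)\overset{d}{=}X_n(p)+(a_{n+1}-a_n)B_{n+1},
\]
where $B_{n+1}$ is Bernoulli with $\mathbb P(B_{n+1}=1)=1-p$ and is independent of $X_n(p)$. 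The mixture of two shifted copies of one law is exactly the law of the sum with an independent selector.

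\textbf{Step 2: unfold the recursion.} Similarly, $P_1(p)=(p,1-p)$ gives $X_1(p)\overset{d}{=}1+B_1$ with $B_1$ Bernoulli$(1-p)$. Set $d_1=1=a_1-1$ and $d_{k+1}=a_{k+1}-a_k$. By induction on $n$,
\[
X_n(p)\overset{d}{=}1+\sum_{k=1}^n d_k B_k
\]
with $B_1,\dots,B_n$ i.i.d.\ Bernoulli$(1-p)$. The induction step only uses that the new selector $B_{n+1}$ is independent of all earlier ones, which holds because it is independent of $X_n(p)$ and we may realize everything on a product space.

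\textbf{Step 3: compute the moments.} The sum telescopes: $\sum_{k=1}^n d_k=a_n-1$. So
\[
\mathbb E(X_n(p))=1+(1-p)(a_n-1)=p+a_n(1-p).
\]
By independence and $\mathbb V(B_k)=p(1-p)$,
\[
\mathbb V(X_n(p))=p(1-p)\sum_{k=1}^n d_k^2=p(1-p)\Bigl(1+\sum_{k=1}^{n-1}(a_{k+1}-a_k)^2\Bigr),
\]
where the leading $1$ comes from $d_1^2=1$.

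The only point needing care is Step 1: correctly identifying $R$ as a shift by $a_{n+1}-a_n$ rather than by $a_n$, and justifying that the mixture equals an independent sum. This is a direct check of coordinates. The remaining computations are routine. The same representation also gives the moment generating function as the product $e^{t}\prod_{k}(p+(1-p)e^{td_k})$, which is presumably what the paper uses for higher moments.
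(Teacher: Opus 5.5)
Your proof is correct and follows essentially the same route as the paper, which also reads off from the recursion that $X_{n+1}(p)$ equals $X_n(p)$ or $X_n(p)+(a_{n+1}-a_n)$ with probabilities $p$ and $1-p$, and then telescopes the one-step recursions for mean and variance. Your unfolding into $1+\sum_{k} d_k B_k$ with independent Bernoulli selectors only makes explicit the independence that the paper's variance recursion uses implicitly.
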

\proof From the definition we have
\[ 
X_{n+1}(p) = \begin{cases} 
      X_{n} & \text{with probability p } \\
      X_{n}+(a_{n+1}-a_{n})&\text{with probability (1-p) }       
   \end{cases}
\]
and $X_{1}\in\{1,2\}$ with probabilities $p$ and $(1-p)$.  Hence 
\[ \mathbb{E}(X_{n+1}(p))= \mathbb{E}(X_{n}(p))+(1-p)(a_{n+1}-a_{n})\]
with $\mathbb{E}(X_{1})=p+2(1-p)$. The formula for the expectation follows since the sum is telescoping. Moreover we get
\[ \mathbb{V}(X_{n+1}(p))= \mathbb{V}(X_{n}(p))+p(1-p)(a_{n+1}-a_{n})^2\]
with $\mathbb{V}(X_{1})=p(1-p)$. The formula for variance follows by induction. \qed\\~\\ 
Now we consider the generating functions for higher moments. 
\begin{proposition}
For a strictly increasing sequence of integers $(a_{n})$ with $a_{1}=2$ and $p\in[0,1]$ the moment generating function of $X_{n}(p)$ is given by
\[
M_n(t)=\mathbb{E}\bigl[e^{tX_n}\bigr]
=
e^t\bigl(p+(1-p)e^t\bigr)\prod_{k=1}^{n-1}\Bigl(p+(1-p)\,e^{\,t\,(a_{k+1}-a_k)}\Bigr).
\] 
\end{proposition}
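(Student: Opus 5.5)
The plan is to use the same stochastic recursion as in the proof of Proposition 2.1. There, $X_{n+1}(p)$ was written as $X_n(p)$ with probability $p$ and as $X_n(p)+(a_{n+1}-a_n)$ with probability $1-p$. This follows directly from the definition: $L$ keeps the positions of $P_n(p)$, while $R$ shifts them by $a_{n+1}-a_n$. First I will make this precise as a distributional identity. I will write $X_{n+1}\stackrel{d}{=}X_n+(a_{n+1}-a_n)B_n$, where $B_n$ is a Bernoulli variable with $\mathbb{P}(B_n=1)=1-p$, independent of $X_n$. The independence holds because the mixture weights $p,1-p$ in the recursion do not depend on the position within $P_n(p)$.

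Next I condition on $B_n$ to get the recursion for moment generating functions:
\[
M_{n+1}(t)=p\,M_n(t)+(1-p)\,e^{t(a_{n+1}-a_n)}M_n(t)=\Bigl(p+(1-p)e^{t(a_{n+1}-a_n)}\Bigr)M_n(t).
\]
This can also be checked coordinatewise: the generating function of $L(v)$ equals that of $v$, and the generating function of $R(v)$ is that of $v$ multiplied by $e^{t(a_{n+1}-a_n)}$.

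For the base case, $X_1\in\{1,2\}$ takes these values with probabilities $p$ and $1-p$. Hence $M_1(t)=pe^t+(1-p)e^{2t}=e^t\bigl(p+(1-p)e^t\bigr)$, which is the claimed formula with an empty product.

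Induction on $n$ then multiplies in one factor $p+(1-p)e^{t(a_{k+1}-a_k)}$ at each step, for $k=1,\dots,n-1$, and this yields the stated product. There is no real obstacle. The only point that needs care is making sure the shift produced by $R$ is exactly $a_{n+1}-a_n$: the vector $P_n(p)$ occupies the last $a_n$ of the $a_{n+1}$ coordinates, so index $j$ is sent to $j+a_{n+1}-a_n$. Overlaps between the two copies are harmless, because the generating function is linear in the probability vector.

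As a consistency check, I will differentiate $\log M_n$ at $t=0$. This gives $\frac{3}{2}$-type terms that telescope to $p+a_n(1-p)$ and reproduces the expectation in Proposition 2.1. The second derivative of $\log M_n$ at $t=0$ likewise recovers the variance formula there.
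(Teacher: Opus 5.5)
Your proof is correct and is essentially the paper's argument: condition on whether the left or right embedding is used to get $M_{n+1}(t)=\bigl(p+(1-p)e^{t(a_{n+1}-a_n)}\bigr)M_n(t)$, compute $M_1(t)=e^t\bigl(p+(1-p)e^t\bigr)$, and iterate. The only blemish is the phrase ``$\frac{3}{2}$-type terms'' in your optional consistency check, which is meaningless (the derivative of $\log M_n$ at $t=0$ is simply $1+(1-p)+(1-p)\sum_{k=1}^{n-1}(a_{k+1}-a_k)=p+a_n(1-p)$), but this does not affect the proof itself.
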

\proof
We have
\[ 
M_{n+1}(t)
=\mathbb{E}\bigl[e^{tX_{n+1}}\bigr]
=p\,\mathbb{E}\bigl[e^{tX_n}\bigr]
+(1-p)\,\mathbb{E}\bigl[e^{t(X_n+a_{n+1}-a_{n})}\bigr].
\]
\[
=p\,M_n(t)+(1-p)e^{t(a_{n+1}-a_{n})}M_n(t)
=\bigl(p+(1-p)e^{t(a_{n+1}-a_{n})}\bigr)\,M_n(t).
\]
Since \[
M_1(t)
=\mathbb{E}\bigl[e^{tX_1}\bigr]
=p\,e^{t}+(1-p)\,e^{2t}
=e^t\bigl(p+(1-p)e^t\bigr),
\]
iterating the recursion yields the product formula.
\qed
\section{The entropy concavity conjecture}
The Shannon entropy of the random variables $X_{n}(p)$ is given by
\[ h_{n}(p)=H(X_{n}(p))=-\sum_{k=1}^{a_{n}}\mathbb{P}(X_{n}(p)=k)\log(\mathbb{P}(X_{n}(p)=k)),\]
where we use the convention that $0\log(0)=0$, see \cite{SCH}. We obviously have $h_{n}(0)=h_{n}(1)$. The symmetry \[h_{n}(p)= h_{n}(1-p)\] follows from \[\mathbb{P}(X_{n}(p)=k)=\mathbb{P}(X_{n}(1-p)=a_{n}+1-k).\] 
Now we state our entropy conjecture.
\begin{conjecture}
For every strictly increasing sequence of integers $(a_{n})$ with $a_{1}=2$ and all $n\ge 1$ the entropy function $h_{n}$ is concave on $[0,1]$ and hence unimodal with maximum at $1/2$.
\end{conjecture}
We prove this conjecture in two extremal cases. 
\begin{proposition} For $a_{n}=n+1$ and for all sequences satisfying $a_{n+1}\ge2a_n$ and $a_{1}=2$, Conjecture 3.1 holds.
\end{proposition}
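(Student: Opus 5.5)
Two cases. In each, the plan is to show $h_n''(p)\le 0$ on $(0,1)$, or that $h_n$ is a nonnegative combination of concave functions. Concavity together with the symmetry $h_n(p)=h_n(1-p)$ gives unimodality with maximum at $1/2$: for a concave symmetric function, $h_n(p)=\tfrac12 h_n(p)+\tfrac12 h_n(1-p)\le h_n(1/2)$, and concavity makes it nondecreasing on $[0,1/2]$.

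\emph{Non-overlapping case $a_{n+1}\ge 2a_n$.} By the remark in the introduction, $X_n(p)$ has the same distribution as $n$ independent Bernoulli choices placed on $2^n$ distinct points. Indeed, inductively the supports of $L(P_n)$ and $R(P_n)$ are disjoint, since $R$ shifts by $a_{n+1}-a_n\ge a_n$. Hence the probability vector is a permutation of $\bigl(p^{n-j}(1-p)^j\bigr)$, with each value occurring $\binom nj$ times. Entropy is additive over independent components, so
\[h_n(p)=n\,h_1(p),\qquad h_1(p)=-p\log p-(1-p)\log(1-p).\]
Alternatively one can argue recursively: when supports are disjoint, $h_{n+1}=h_1+h_n$. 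The binary entropy satisfies $h_1''(p)=-1/(p(1-p))<0$, so $h_n$ is concave on $[0,1]$, with continuity at the endpoints.

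\emph{Binomial case $a_n=n+1$.} Here $X_n(p)-1\sim\mathrm{Bin}(n,1-p)$. By the symmetry it suffices to prove that $p\mapsto H(\mathrm{Bin}(n,p))$ is concave. I would use the standard derivative identities. With $b_k=\binom nk p^k(1-p)^{n-k}$,
\[\frac{d}{dp}b_k=n\bigl(b^{(n-1)}_{k-1}-b^{(n-1)}_k\bigr),\]
where $b^{(n-1)}$ denotes the $\mathrm{Bin}(n-1,p)$ weights. Then
\[h'=-\sum_k b_k'\log b_k=n\sum_{k=0}^{n-1} b^{(n-1)}_k\log\frac{b_{k+1}}{b_k},\qquad \log\frac{b_{k+1}}{b_k}=\log\frac{n-k}{k+1}+\log\frac{p}{1-p}.\]
So $h'(p)=n\log\frac{p}{1-p}+n\sum_k b^{(n-1)}_k\log\frac{n-k}{k+1}$. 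Differentiating once more with the same identity at level $n-1$ gives
\[h''(p)=\frac{n}{p(1-p)}+n(n-1)\sum_{k=0}^{n-2} b^{(n-2)}_k\log\frac{(n-k-1)(k+1)}{(n-k)(k+2)}.\]
Each logarithm is negative, and the task is to show the sum is at most $-\frac{1}{(n-1)p(1-p)}$.

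\emph{Main obstacle: the binomial inequality.} For this I would use $\log(1-x)\le -x$:
\[\log\frac{(n-k-1)(k+1)}{(n-k)(k+2)}=\log\Bigl(1-\tfrac1{n-k}\Bigr)+\log\Bigl(1-\tfrac1{k+2}\Bigr)\le-\frac1{n-k}-\frac1{k+2}.\]
The expectations of the right-hand terms have closed forms. Writing $m=n-2$ and $b_k=b^{(m)}_k$, one has $\sum_k b_k\frac{1}{k+2}\ge\sum_k b_k\frac{1}{(k+1)(k+2)}$, and
\[\sum_k b_k\frac{1}{(k+1)(k+2)}=\frac{1-(1-p)^{n}-np(1-p)^{n-1}}{n(n-1)p^2}.\]
Together with the mirror term, the claim reduces to an elementary inequality in $p$ that may fail near the endpoints. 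If the crude bound is insufficient, I would instead cite the known result that the entropy of $\mathrm{Bin}(n,p)$, indeed of any Bernoulli sum, is concave in $p$, via Shepp--Olkin type arguments. I would also check small $n$ directly: for $n=1$ the statement is binary entropy, and for $n=2$ one has $h_2=2h_1-2p(1-p)\log 2$, which is concave because $h_1''=-1/(p(1-p))\le-4<-4\log 2$. This step is where the real work lies.
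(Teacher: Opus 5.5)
Your non-overlapping case follows the paper's argument: both reach $h_n=n\,h_1$. Your recursion $h_{n+1}=h_1+h_n$, from the disjoint supports of $L(P_n)$ and $R(P_n)$ plus the grouping rule, is if anything a cleaner justification than the paper's appeal to the explicit $2^n$-point formula. In the binomial case your proof ultimately rests on citing Shepp--Olkin, which is exactly what the paper does, so the proposal stands on that basis.

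The direct computation you offer as the main route, however, has a sign error. The inequality it produces is not merely delicate near the endpoints; it is false for every $p\in(0,1)$. Since $\sum_k b_k'=0$, one gets $h'=-\sum_k b_k'\log b_k=n\sum_k b^{(n-1)}_k\log(b_k/b_{k+1})$, not $\log(b_{k+1}/b_k)$. For $n=1$ the correct version gives $h_1'=\log\frac{1-p}{p}$. Your formula for $h''$ is therefore exactly $-h''$. At $n=2$ it yields $\frac{2}{p(1-p)}-4\log 2>0$, contradicting your own correct computation $h_2''=-\frac{2}{p(1-p)}+4\log 2$. The correct identity is
\[
h''(p)=-\frac{n}{p(1-p)}+n(n-1)\sum_{j=0}^{n-2}b^{(n-2)}_j\log\Bigl[\Bigl(1+\frac{1}{n-1-j}\Bigr)\Bigl(1+\frac{1}{j+1}\Bigr)\Bigr].
\]
Concavity therefore needs an \emph{upper} bound on this positive sum. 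Your target inequality is equivalent to $h''\ge 0$, i.e.\ convexity. Moreover, $\log(1-x)\le -x$ only bounds these logarithms from below.

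With the sign fixed, your idea goes through and gives a self-contained proof. For $n\ge 2$, use $\log(1+x)\le x$ together with
\[
\sum_j b^{(n-2)}_j\frac{1}{j+1}=\frac{1-(1-p)^{n-1}}{(n-1)p},\qquad \sum_j b^{(n-2)}_j\frac{1}{n-1-j}=\frac{1-p^{n-1}}{(n-1)(1-p)}.
\]
This gives
\[
h''(p)\le-\frac{n}{p(1-p)}+n\Bigl(\frac{1-(1-p)^{n-1}}{p}+\frac{1-p^{n-1}}{1-p}\Bigr)=-n\Bigl(\frac{(1-p)^{n-1}}{p}+\frac{p^{n-1}}{1-p}\Bigr)<0 .
\]
This yields an elementary, citation-free proof of the binomial case, with an explicit negative bound on $h_n''$. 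The paper simply invokes the Shepp--Olkin theorem.
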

\proof If $a_{n}=n+1$, as we already observed in the introduction $X_{n}(p)$ is just a shift of the Binomial distribution; $X_{n}(p)\sim \mbox{Binomial}(p,n)+1$. Such a Shift obviously does not change the entropy. Hence the result follows from the Shepp and Olkin theorem, that states that the Shannon entropy of the binomial distribution is concave, see \cite{SO}.\\
If $a_{n+1}\ge2a_n$ we obtain using the probability formula from the introduction 
\[  h_{n}(p)=-\sum_{k=1}^{2^n}p^{n-\sharp_{n}(k)} (1-p)^{\sharp_{n}(k)}\log(p^{n-\sharp_{n}(k)} (1-p)^{\sharp_{n}(k)})\]
\[ =-\sum_{i=0}^{n}\binom{n}{i}p^{n-i}(1-p)^{i}\log(p^{n-i}(1-p)^{i}) =-n(p\log(p)+(1-p)\log(1-p)),\]
where $\sharp_{n}(k)$ is the number of digits $1$ in the dyadic representation of $k-1$. 
Now it is straightforward that the function is in fact concave. \qed\\~\\
We have no proof of the concavity conjecture for other increasing infinite sequences. Nevertheless in the next section we obtain computational evidence for many classes of sequences. 
\section{Computational evidence} Given an increasing finite sequence $(a_{n})$ with $a_{1}=2$ we use a Mathematica program to test concavity of the entropy function $h_{n}(p)$ of the random variables $X_{n}(p)$. The Mathematica program used for all computations is available from the author. We found no counterexample among any of the tested sequences.
We were able to test linear sequence \[ (a_{n})=(2n),(3n-1),(5n-3),(10n-8)\] up to $n=200$. In figure 1 we display $h_{n}$ and $h_{n}^{\prime\prime}$ for $n=10,50,100$ in the case $(a_{n})=(2n)$. 
\begin{figure}
\vspace{0pt}\hspace{-25pt}\scalebox{0.3}{\includegraphics
{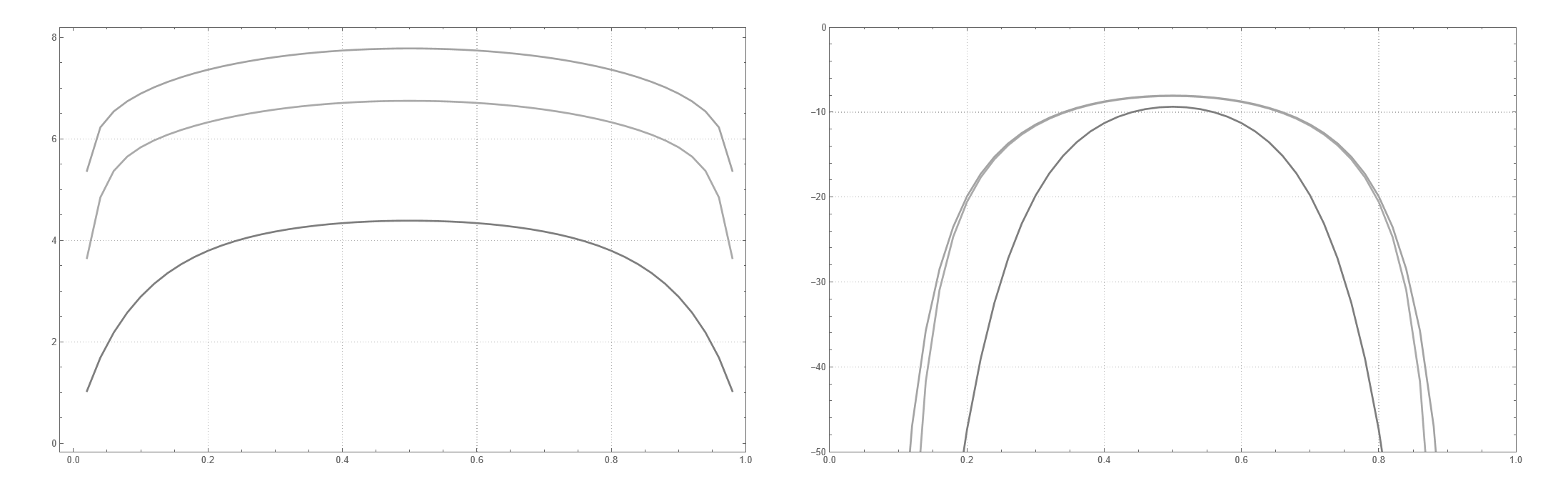}}
\caption{$h_{n}$ and $h_{n}^{\prime\prime}$ for $n=10,50,100$ in the case $(a_{n})=(2n)$. }
\end{figure}
We tested the polynomial sequences \[ (a_{n})=(n^2+1),(n^3+1)\] up to $n=100$ in the first case and $n=50$ in the second case. For the fist sequences We display $h_{n}$ and $h_{n}^{\prime\prime}$ for $n=10,50,100$ in figure 2.
\begin{figure}
\vspace{0pt}\hspace{-25pt}\scalebox{0.3}{\includegraphics
{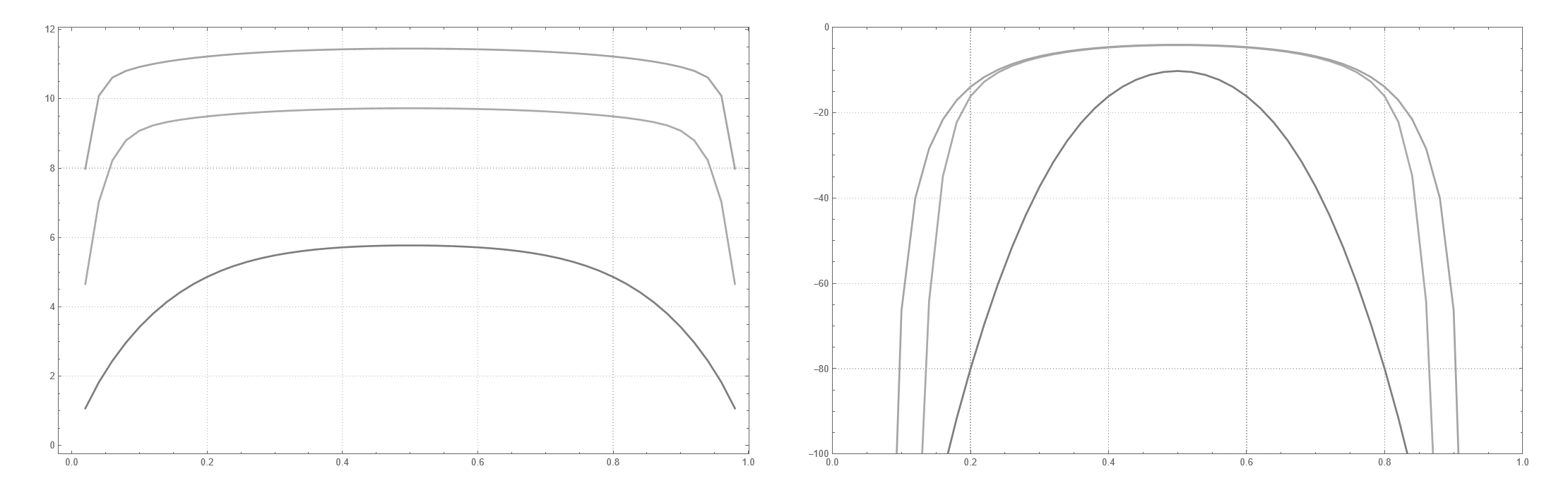}}
\caption{$h_{n}$ and $h_{n}^{\prime\prime}$ for $n=10,50,100$ in the case $(a_{n})=(n^2+1)$}
\end{figure}
We tested the exponential increasing sequences $a_{n}=\lceil(3/2)^{n}\rceil$ and the recursively definiere sequence $a_{n+1}=a_{n}+a_{n-1}+1$ with $a_{0}=1$ and $a_{0}=2$ up to $n=30$. In the second case we include a figure of $h_{n}$ and $h_{n}^{\prime\prime}$ with $n=10,15,20$ for the second sequence.
\begin{figure}
\vspace{0pt}\hspace{-25pt}\scalebox{0.3}{\includegraphics
{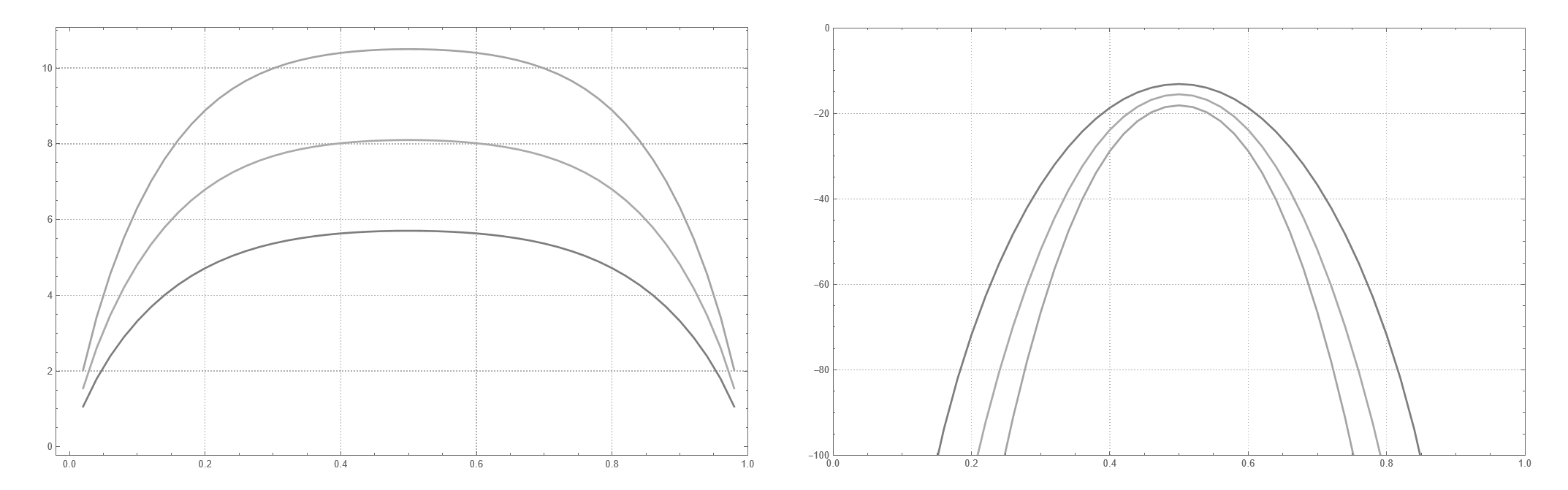}}
\caption{$h_{n}$ and $h_{n}^{\prime\prime}$ for $n=10,15,20$ in the case $a_{n+1}=a_{n}+a_{n-1}+1$}
\end{figure}
Moreover we testes stochastically increasing sequences up to $n=100$. The increment of the sequence was chosen uniformly from ${1,\dots,1000}$. The random experiments suggest that, if a counterexample exists, it must occur
only for considerably larger values of $n$ or for much more irregular overlap
sequences than those tested here.            

\end{document}